\def\R{\mathbb{R}}
\def\N{\mathbb{N}}
\def\Z{\mathbb{Z}}
\def\Bool{I\!\!B}
\begin{document}

\title{Some notes on the abstraction operation for Multi-Terminal Binary Decision Diagrams\footnote{The final publication is available at \href{link.springer.com}{link.springer.com}}
}

\titlerunning{Notes on the abstraction operation for MTBDDs}        


\author{Ludwig Griebl        \and
        Johann Schuster 
}



\institute{L. Griebl \at
              University of applied sciences Landshut \\
              \email{griebl@fh-landshut.de}           
           \and
           J. Schuster \at
           University of the Federal Armed Forces Munich \\
              \email{johann.schuster@unibw.de}
}

\date{}

\maketitle

\begin{abstract}
The starting point of this work are inaccurate 
statements found in the literature for Multi-terminal Binary Decision Diagrams (MTBDDs) \cite{bahar:97,kuntz:06,Siegle:Habil}
regarding the well-definedness of the MTBDD abstraction operation.
The statements try to relate an operation $*$ on a set of terminal values $M$ to the 
property that the abstraction over this operation does depend on the order of the abstracted variables.
This paper gives a necessary and sufficient condition for the independence of the abstraction operation
of the order of the abstracted variables in the case of an underlying monoid and it treats 
the more general setting of a magma.
\end{abstract}

\section{Introduction}

Multi-terminal Binary Decision Diagrams (MTBDDs) are widely used to compactly store functions of Boolean sets.
In the model-checking context they are often used to store generator matrices of large Markov chains
generated in a compositional way \cite{parker:prism}.
This work considers an elementary operation on MTBDDs (sometimes also called Algebraic Decision Diagrams), which is called ``Abstraction''. 
Whenever a set $S$, $|S|> 1$ of MTBDD-variables is to be abstracted,
this operation is not necessarily well-defined, as it can depend on the order of abstracted variables.
We try to formulate as precise as possible the criteria for a well-defined abstraction over more than one variable.

We will call the algebraic structure that leads to well-defined abstraction operations ``abstractable magmas'', but
we would like to add that these magmas have a number of names in the literature as e.g.~
``medial'', ``abelian'', ``alternation'', ``transposition'', ``interchange'', ``bi-commutative'', ``bisymmetric'', ``surcommutative'', ``entropic''
\cite{strecker:74,jezek:83}. 
Whenever left- and right-division is possible (i.e.~the magma is already a quasigroup), some deep structure theorems are known for the medial case.
For a recent survey we refer to \cite{shcher:05}.

So far, there have been only statements on well-defined abstraction operations for MTBDDs in the literature that did not 
use abstractable magmas \cite{bahar:97,kuntz:06,Siegle:Habil}. We show in this paper that all the aforementioned statements so far are inaccurate and show that
the correct criterion are abstractable magmas.

The paper is organised as follows: In Sec.~\ref{sec:general} we give some basic definitions.
The main statements and lemmata are formulated
in Sec.~\ref{sec:twoelementabstraction} and Sec.~\ref{sec:bign}. 
Sec.~\ref{sec:assoc} gives the connection to associativity and
commutativity. Some illustrating examples are given in Sec.~\ref{sec:monoid}-\ref{sec:semigroup}, 
Sec.~\ref{sec:onR} transfers the results to operations on $\R$ (modulo some euqivalence class) and Sec.~\ref{sec:conclusion} concludes the paper.

\section{Definitions}
\label{sec:general}

\begin{definition}[Algebraic structures]
A \emph{magma} $(M,*)$ consists of a set $M$ equipped with a single binary operation $*: M \times M \rightarrow M$. 
A binary operation is \emph{closed} by definition (i.e.~$a*b\in M$ for all $a,b\in M$), but no other axioms are imposed on the operation.
A \emph{semigroup} $(M,*)$ is a magma where $*$ is associative, i.e.~$(a*b)*c=a*(b*c)$ for all $a,b,c\in M$.
A \emph{monoid} $(M,*)$ is a semigroup with an unit element, i.e.~$\exists e\in M$ such that $e*a = a*e = a$ for all $a\in S$.
For a magma $(M,*)$ operation $*$ is called \emph{commutative}, if $a*b=b*a$ for all $a,b\in M$.
\end{definition}

\begin{definition}
\label{def:switching}
Let $(M,*)$ be a magma.
The generalised switching functions over M with $n$ variables ($n \in \N$) are defined as
\[ GSF(n,M):=\{\Bool^n \rightarrow M\}\]
For $i \in \{1,\dots , n\}$ define the mapping A(i) (called abstraction with respect to $*$) by
\[
\begin{array}{cccc}
A(i): & GSF(n,M) & \rightarrow & GSF(n,M) \\
      & f        & \mapsto     & A(i)(f)
\end{array}
\]
where 
\[
\begin{array}{cccl}
A(i)(f): & \Bool^n           & \rightarrow & M \\
         & (b_1,\ldots, b_n) & \mapsto     &   f(b_1,\ldots, b_{i-1},0,b_{i+1},\ldots ,b_n)*f(b_1,\ldots, b_{i-1},1,b_{i+1},\ldots ,b_n).
\end{array}
\]
If the context is clear we omit the term ``with respect to $*$''.
\end{definition}

\begin{remark}
This work centers around a natural question: Under which conditions is the following diagram commutative ($i,j\in \{1,2,\ldots, n\}$, $i\neq j$):
\begin{displaymath}
    \xymatrix{ GSF(n,M) \ar[rr]^{A(i)} \ar[d]_{A(j)} & \  & GSF(n,M) \ar[d]^{A(j)} \\
               GSF(n,M) \ar[rr]_{A(i)} & \  & GSF(n,M)}
\end{displaymath}
\end{remark}

\begin{definition}
\label{def:abstractable}
Let $n \in \N$ be arbitrary but fixed.
\begin{enumerate}
\item $f \in GSF(n,M)$ is called \emph{abstractable} 
      when it holds that $A(i)\circ A(j)(f) = A(j)\circ A(i)(f)$ $\forall i,j \in \{1,2,\ldots, n\}$.
\item $GSF(n,M)$ is called \emph{abstractable} 
      when all 
      $f\in GSF(n,M)$ are abstractable.
\end{enumerate}
\end{definition}

\begin{remark}
By definition, $GSF(1,M)$ is abstractable for every $(M,*)$.
\end{remark}

\section{The case $n=2$}
\label{sec:twoelementabstraction}
Fix a magma $(M,*)$.

\begin{lemma}
\label{2-abstraction}
Abstractability is characterised as follows:
\begin{enumerate}
  \item $f\in GSF(2,M)$ abstractable $\Leftrightarrow$ $(f(0,0)*f(1,0))*(f(0,1)*f(1,1))=(f(0,0)*f(0,1))*(f(1,0)*f(1,1))$
  \item $GSF(2,M)$ abstractable $\Leftrightarrow$ $(a*b)*(c*d)=(a*c)*(b*d)$ $\forall a,b,c,d \in M$
\end{enumerate}
\end{lemma}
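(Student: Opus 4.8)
The plan is to prove both equivalences by unwinding the definition of the abstraction operation $A(i)$ and then comparing the two compositions $A(1)\circ A(2)$ and $A(2)\circ A(1)$ directly. Since we are in the case $n=2$, the only relevant pair of indices is $\{1,2\}$, so abstractability of a single $f$ reduces to the single equation $A(1)(A(2)(f))=A(2)(A(1)(f))$, and it suffices to evaluate both sides at an arbitrary point of $\Bool^2$; but because a $GSF(2,M)$ function after abstracting both variables no longer depends on either coordinate, both sides are in fact constant functions, so the equality of functions collapses to a single equation in $M$.

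\medskip
For part (1), I would first compute $A(2)(f)$ explicitly: it is the function sending $(b_1,b_2)$ to $f(b_1,0)*f(b_1,1)$. Applying $A(1)$ to this substitutes $0$ and $1$ into the first slot and combines with $*$, yielding
\[
A(1)(A(2)(f)) = \bigl(f(0,0)*f(0,1)\bigr)*\bigl(f(1,0)*f(1,1)\bigr).
\]
Symmetrically, computing $A(1)(f)$ first and then applying $A(2)$ gives
\[
A(2)(A(1)(f)) = \bigl(f(0,0)*f(1,0)\bigr)*\bigl(f(0,1)*f(1,1)\bigr).
\]
Setting these two constants equal is exactly the stated characterisation, so part (1) follows once I have carefully tracked which argument is held fixed at each stage. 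The only thing to be careful about is the bookkeeping of the Boolean arguments: which slot gets the outer $*$ and which the inner, since $*$ is not assumed commutative or associative.

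\medskip
For part (2), the direction ``$\Leftarrow$'' is immediate: if $(a*b)*(c*d)=(a*c)*(b*d)$ holds for all $a,b,c,d\in M$, then in particular it holds for the four values $a=f(0,0)$, $b=f(1,0)$, $c=f(0,1)$, $d=f(1,1)$, which by part (1) makes every $f\in GSF(2,M)$ abstractable. For the converse ``$\Rightarrow$'', I would use the fact that $GSF(2,M)$ contains \emph{all} functions $\Bool^2\to M$, so given arbitrary $a,b,c,d\in M$ I can choose the particular $f$ with $f(0,0)=a$, $f(1,0)=b$, $f(0,1)=c$, $f(1,1)=d$; abstractability of this $f$ then yields the identity for that specific quadruple, and since $a,b,c,d$ were arbitrary the universally quantified mediality law follows. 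The main subtlety here is simply matching up the roles of the arguments correctly so that the two presentations of the identity in part (1) align with the $(a*b)*(c*d)=(a*c)*(b*d)$ form; I expect this indexing to be the only genuine obstacle, and it is purely a matter of notational care rather than mathematical depth.
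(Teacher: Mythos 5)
Your proposal is correct and follows essentially the same route as the paper: part (1) by directly unwinding $A(1)\circ A(2)$ and $A(2)\circ A(1)$ into the two parenthesisations (your computations of both composites agree with the paper's), and part (2) by instantiating the quadruple from a given $f$ in one direction and constructing $f$ from an arbitrary quadruple in the other. The only cosmetic difference is that the paper phrases both implications of part (2) contrapositively, which changes nothing in substance.
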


\begin{proof}
The first part of the Lemma will be used in the proof of the second part.
\begin{enumerate}
  \item Two abstractions are possible and they have to be equal:
        \begin{enumerate}
          \item $A(2)\circ A(1)(f)=\ldots=(f(0,0)*f(1,0))*(f(0,1)*f(1,1))$
          \item $A(1)\circ A(2)(f)=\ldots=(f(0,0)*f(0,1))*(f(1,0)*f(1,1))$
         \end{enumerate}
  \item $\Leftarrow$ Let $GSF(2,M)$ be not abstractable $\Rightarrow$ $\exists f\in GSF(2,M)$ that is not abstractable. Define $a:=f(0,0)$, $b:=f(0,1)$, $c:=f(1,0)$, $d:=f(1,1)$.
                      Then (by part 1 of the lemma) $(a*b)*(c*d)\neq (a*c)*(b*d)$ \\
        $\Rightarrow$ Let $(a*b)*(c*d)\neq (a*c)*(b*d)$ and define $f(0.0):=a$, $f(0,1):=b$, $f(1,0):=c$ and $f(1,1):=d$. Then (by part 1) $f$ not abstractable.
\end{enumerate}

\end{proof}

\section{The case $n\in \N$}
\label{sec:bign}

\begin{lemma}
\label{lemma2a}
The following statements are equivalent ($n\geq 2$):\\
$GSF(n,M)$ abstractable $\Leftrightarrow$ $(a*b)*(c*d)=(a*c)*(b*d)$ holds for all $a,b,c,d\in M$.
\end{lemma}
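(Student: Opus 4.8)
The plan is to reduce the statement for general $n$ to the already-settled case $n=2$ of Lemma~\ref{2-abstraction}, the point being that $A(i)$ and $A(j)$ only ever combine the four values of $f$ obtained by flipping coordinates $i$ and $j$ while every other coordinate stays fixed. Concretely, I would fix an arbitrary $f\in GSF(n,M)$, two indices $i\neq j$, and an arbitrary assignment of all coordinates except $i$ and $j$; writing $w,x,y,z$ for the four values $f(\ldots,b_i{=}0,\ldots,b_j{=}0,\ldots)$, $f(\ldots,0,\ldots,1,\ldots)$, $f(\ldots,1,\ldots,0,\ldots)$, $f(\ldots,1,\ldots,1,\ldots)$, a direct unfolding of Definition~\ref{def:switching} gives, at this point, $A(i)\circ A(j)(f)=(w*x)*(y*z)$ and $A(j)\circ A(i)(f)=(w*y)*(x*z)$. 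Equivalently, the restriction $g(s,t):=f(\ldots,b_i{=}s,\ldots,b_j{=}t,\ldots)$ is an element of $GSF(2,M)$, and the two composites evaluated at our fixed point are exactly the two iterated abstractions of $g$ computed in the proof of Lemma~\ref{2-abstraction}.

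For the implication ``$\Leftarrow$'': assuming the medial identity $(a*b)*(c*d)=(a*c)*(b*d)$, I would instantiate it with $a=w$, $b=x$, $c=y$, $d=z$ to obtain $(w*x)*(y*z)=(w*y)*(x*z)$ directly, i.e. $A(i)\circ A(j)(f)$ and $A(j)\circ A(i)(f)$ agree at the chosen point. Since the point (the assignment of the remaining coordinates), the indices $i\neq j$, and $f$ were arbitrary, this yields $A(i)\circ A(j)(f)=A(j)\circ A(i)(f)$ as functions for all $i,j$, so every $f\in GSF(n,M)$ is abstractable. Alternatively, and perhaps cleaner to write up, I would note that the medial law makes $GSF(2,M)$ abstractable by Lemma~\ref{2-abstraction}(2), hence each restriction $g$ is abstractable, and then transport this back through the identification of the composites above.

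For the implication ``$\Rightarrow$'': assuming $GSF(n,M)$ abstractable and given arbitrary $a,b,c,d\in M$, I would use $n\geq 2$ to define a witness $f\in GSF(n,M)$ depending only on the first two coordinates, namely $f(b_1,b_2,\ldots,b_n):=T(b_1,b_2)$ with $T(0,0)=a$, $T(0,1)=b$, $T(1,0)=c$, $T(1,1)=d$. Applying the computation of the first paragraph with $i=1$, $j=2$ (here $w,x,y,z=a,b,c,d$), abstractability of $f$ forces $(a*b)*(c*d)=(a*c)*(b*d)$, and since $a,b,c,d$ were arbitrary the medial law holds. The one place demanding care, and the only real obstacle, is the index bookkeeping: one must keep straight which of the four values is paired with which under each composition order so that the resulting identity is genuinely the medial law and not merely a trivial rearrangement; the role of $n\geq 2$ (needed to have two distinct coordinates to abstract over, and invoked in the witness construction) should also be made explicit.
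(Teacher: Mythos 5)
Your proposal is correct and follows essentially the same route as the paper: the same witness function depending only on the first two coordinates for ``$\Rightarrow$'', and the same restriction of $f$ to the two abstracted coordinates (reducing to Lemma~\ref{2-abstraction}) for ``$\Leftarrow$'', with the paper merely phrasing both directions contrapositively where you argue directly. Your unfolding $A(i)\circ A(j)(f)=(w*x)*(y*z)$ versus $A(j)\circ A(i)(f)=(w*y)*(x*z)$ matches the paper's computation exactly.
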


\begin{proof}
$\Rightarrow$: Assume $\exists (a,b,c,d)\in M^4: (a*b)*(c*d)\neq (a*c)*(b*d)$. Define 
\[
\begin{array}{rccl}
f: & \Bool^n               &\rightarrow & M \\
   & (b_1,b_2,\ldots, b_n) & \mapsto    & \begin{cases}a & \texttt{ for } (b_1,b_2)=(0,0) \\
                                                {b} & \texttt{ for } (b_1,b_2)=(0,1) \\
                                                {c} & \texttt{ for } (b_1,b_2)=(1,0) \\
                                                {d} & \texttt{ for } (b_1,b_2)=(1,1) \end{cases}
\end{array}
\]
then $f\in GSF(n,M)$, but 
\begin{equation}
\label{eq:hinundher}
\begin{array}{rl}
     & A(1)\circ A(2)(f)(b_1,\ldots,b_n) \\
  =  & \left((\underbrace{f(0,0,b_3\ldots, b_n)}_{a}*\underbrace{f(0,1,b_3\ldots, b_n)}_{b}) * (\underbrace{f(1,0,b_3\ldots, b_n)}_{c}*\underbrace{f(1,1,b_3\ldots, b_n)}_{d}) \right)\\
\neq &  \left( (\underbrace{f(0,0,b_3\ldots, b_n)}_{a}*\underbrace{f(1,0,b_3\ldots, b_n)}_{c}) * (\underbrace{f(0,1,b_3\ldots, b_n)}_{b}*\underbrace{f(1,1,b_3\ldots, b_n)}_{d}) \right)\\
  =  & A(2)\circ A(1)(f)(b_1,\ldots,b_n),
\end{array}
\end{equation}
so $A(1)\circ A(2)(f)\neq A(2)\circ A(1) (f)$.\\
$\Leftarrow$: Assume that $GSF(n,M)$ is not abstractable. Then $\exists \ f\in GSF(n,M)$ and $i,j\in \{1,2,\ldots, n\}$, $i\neq j$ such that
$A(i)\circ A(j)(f)\neq A(i)\circ A(j)(f)$, that is $\exists \beta = (b_1,b_2,\ldots,b_n): A(i)\circ A(j)(f) (\beta) \neq A(i)\circ A(j)(f) (\beta)$.
Assume without loss of generality $i<j$.
The following function is in $GSF(2,M)$:
\[ \begin{array}{cccl}
   h: & \Bool^2 & \rightarrow & M \\
      & (x,y)   & \mapsto     & f(\beta_1,\beta_2 \ldots, \beta_{i-1}, x, \beta_{i+1},\ldots, \beta_{j-1},y,\beta_{j+1},\ldots, \beta_n)
   \end{array}.
\]
By definition $A(1)\circ A(2)(h)\neq A(2)\circ A(1) (h) \stackrel{\texttt{Lemma \ref{2-abstraction}}}{\Rightarrow} \exists a,b,c,d \in M$ with $(a*b)*(c*d)\neq (a*b)*(c*d)$.
\end{proof}

\noindent
Lemma \ref{lemma2a} 
motivates
the following definition:

\begin{definition}
$(M,*)$ is abstractable when 
$(a*b)*(c*d)= (a*c)*(b*d)$ holds for all $a,b,c,d \in M$.
\end{definition}

\begin{corollary}
$(M,*)$ abstractable $\Leftrightarrow$ $\exists n\in \N, n\geq 2$, $GSF(n,M)$ is abstractable $\Leftrightarrow$ $\forall n\in \N$ $GSF(n,M)$ is abstractable.
\end{corollary}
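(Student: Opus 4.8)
The plan is to prove the two equivalences as a short cycle of implications, using Lemma~\ref{lemma2a} as the single workhorse. The key observation is that the condition ``$(a*b)*(c*d)=(a*c)*(b*d)$ for all $a,b,c,d\in M$'' --- which is precisely the definition of $(M,*)$ being abstractable --- makes no reference to $n$. Lemma~\ref{lemma2a} asserts that for every fixed $n\geq 2$ this identity is equivalent to the abstractability of $GSF(n,M)$; hence the single magma-level identity is simultaneously equivalent to the abstractability of each individual $GSF(n,M)$ with $n\geq 2$. Thus the corollary is essentially a reformulation of Lemma~\ref{lemma2a} emphasising that the truth value of ``$GSF(n,M)$ abstractable'' is constant in $n$ (for $n\geq 2$) and is governed solely by the medial identity.

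Concretely, I would argue as follows. To show that $(M,*)$ abstractable implies that $GSF(n,M)$ is abstractable for all $n\in\N$, I invoke Lemma~\ref{lemma2a} in the direction ``identity $\Rightarrow$ abstractable'' once for each $n\geq 2$, and I dispose of the remaining case $n=1$ by the Remark stating that $GSF(1,M)$ is abstractable for every magma. The implication ``for all $n$'' $\Rightarrow$ ``there exists $n\geq 2$'' is immediate by instantiating $n=2$. Finally, ``there exists $n\geq 2$ with $GSF(n,M)$ abstractable'' implies the identity by a single application of Lemma~\ref{lemma2a} in the reverse direction for that particular $n$; since the identity is by definition the statement that $(M,*)$ is abstractable, the cycle closes and all three statements are equivalent.

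The only point requiring slight care is that Lemma~\ref{lemma2a} is stated for $n\geq 2$, so it does not directly cover the $n=1$ instance hidden inside the universally quantified middle statement; this is exactly why the Remark on $GSF(1,M)$ is needed to complete the ``for all $n$'' direction. Apart from this bookkeeping, I expect no genuine obstacle, since all the combinatorial content has already been absorbed into Lemma~\ref{lemma2a}.
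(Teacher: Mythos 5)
Your proposal is correct and follows essentially the same route as the paper, whose entire proof is the one-line observation that Lemma~\ref{lemma2a} makes abstractability of $GSF(n,M)$ independent of $n$ for $n\geq 2$; you merely spell out the resulting cycle of implications and explicitly invoke the Remark on $GSF(1,M)$ for the $n=1$ case, which the paper leaves tacit.
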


\begin{proof}
Lemma \ref{lemma2a} shows that abstractability does not depend on $n$ ($n\geq 2$).
\end{proof}

\section{Associativity, commutativity and abstractability}
\label{sec:assoc}
The following two lemmata show the connection between associativity, commutativity and abstractability.
\begin{lemma}
\label{fundamental1}
  Let $(M,*)$ be a magma, then it holds that\\
  ( $*$ commutative $\wedge$ $*$ associative ) $\Rightarrow$ $(M,*)$ abstractable.
\end{lemma}
\begin{proof}
This is the trivial implication one immediately sees from $(x_1*x_2)*(x_3*x_4)\stackrel{!}{=}(x_1*x_3)*(x_2*x_4)$. 
As $*$ is associative, all parentheses can be omitted, so it remains 
to swap $x_2$ and $x_3$ to get equality. But this is possible due to commutativity. \\
\end{proof}

\begin{lemma}
\label{fundamental2komm}
  Let $(M,*)$ be a magma, 
  then it holds that\\
  ( $(M,*)$ abstractable $\wedge$ $\exists$ left and right unit elements $e_l, e_r\in M$ ) $\Rightarrow$ ($*$ commutative).\\
\end{lemma}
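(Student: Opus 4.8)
The plan is to exploit the abstractability (medial) identity $(a*b)*(c*d)=(a*c)*(b*d)$ by means of a single, carefully chosen substitution. The guiding idea is that the left and right unit elements are exactly what is needed to strip away the ``extra'' factors on both sides of the identity, so that what remains is precisely the commutativity statement $b*c=c*b$.

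Concretely, I would substitute $a:=e_l$ and $d:=e_r$ into the abstractability law, leaving $b,c\in M$ arbitrary. The left-hand side then becomes $(e_l*b)*(c*e_r)$ and the right-hand side becomes $(e_l*c)*(b*e_r)$. Next I apply the defining properties of the units: since $e_l$ is a left unit, $e_l*b=b$ and $e_l*c=c$; since $e_r$ is a right unit, $c*e_r=c$ and $b*e_r=b$. After these four simplifications the left-hand side collapses to $b*c$ and the right-hand side to $c*b$. Because abstractability forces the two sides to coincide, I conclude $b*c=c*b$ for all $b,c\in M$, which is exactly commutativity.

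I do not anticipate any genuine obstacle: the entire argument is carried by the one substitution $a=e_l$, $d=e_r$, and the only thing to check carefully is that \emph{both} a left unit and a right unit are invoked, each in the correct factor (the left unit absorbs the first argument of each product, the right unit the last). For completeness one may also note the standard observation that the hypotheses actually force $e_l=e_r$, since $e_l=e_l*e_r=e_r$ by applying the right-unit and then the left-unit property; this remark is not needed for the proof but clarifies that the two units are in fact the same two-sided element.
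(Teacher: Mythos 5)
Your proof is correct and takes essentially the same route as the paper: the paper's own proof also sets $a=e_l$, $d=e_r$ in the abstractability identity $(a*b)*(c*d)=(a*c)*(b*d)$ and reads off $b*c=c*b$. You merely spell out the four unit-element simplifications explicitly (and add the unneeded but accurate observation that $e_l=e_r$), which the paper leaves implicit.
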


\begin{proof}
Assuming abstractability, $(a*b)*(c*d)=(a*c)*(b*d)$ holds for all $a,b,c,d \in M$.
Setting $a=e_l$ $d=e_r$ it must also hold that $b * c = c * b$ for arbitrary $b$, $c$ (commutativity).
\end{proof}

\begin{lemma}
\label{fundamental2}
  Let $(M,*)$ be a magma, 
  then it holds that\\
  ( $(M,*)$ abstractable $\wedge$ $\exists$ unit element $e \in M$ ) $\Rightarrow$ ($*$ commutative $\wedge$ $*$ associative).\\
\end{lemma}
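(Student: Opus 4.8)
The plan is to extract both properties directly from the abstractability identity $(a*b)*(c*d)=(a*c)*(b*d)$ by feeding in well-chosen substitutions that use the unit element $e$ to collapse one of the four slots. Since a two-sided unit is in particular both a left and a right unit, commutativity is already handed to us for free by Lemma \ref{fundamental2komm} (taking $e_l=e_r=e$); alternatively, it follows from the single substitution $a=d=e$ in the medial law, which yields $(e*b)*(c*e)=(e*c)*(b*e)$, i.e. $b*c=c*b$.

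The genuinely new content is associativity, and here the key observation is that the medial law degenerates into the associative law once a unit is inserted in the third slot. Concretely, I would substitute $a=x$, $b=y$, $c=e$, $d=z$ into $(a*b)*(c*d)=(a*c)*(b*d)$. The left-hand side becomes $(x*y)*(e*z)$, and the right-hand side becomes $(x*e)*(y*z)$. Applying the unit axioms $e*z=z$ and $x*e=x$ then collapses these to
\[
(x*y)*z = x*(y*z),
\]
which is exactly associativity for arbitrary $x,y,z\in M$.

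I do not expect a real obstacle in this argument: once the correct substitution is spotted, everything reduces to one application of the unit axiom on each side. The only thing requiring a moment of thought is recognising \emph{where} to place $e$ among the four variables $a,b,c,d$; placing it in the $c$-position (rather than, say, in $b$ or the diagonal used for commutativity) is precisely what leaves $x,y,z$ in the pattern $(x*y)*z$ versus $x*(y*z)$ after simplification. I would therefore present the proof as two short substitutions — one for commutativity and one for associativity — and note that no associativity or commutativity was assumed on $*$ beyond the medial identity and the existence of $e$.
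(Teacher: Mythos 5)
Your proposal is correct and takes essentially the same route as the paper: the paper derives associativity by setting $b=e$ in the medial law, obtaining $a*(c*d)=(a*e)*(c*d)=(a*c)*(e*d)=(a*c)*d$, which is the mirror image of your substitution $c=e$, and it likewise obtains commutativity via Lemma \ref{fundamental2komm} (whose proof is exactly your $a=e$, $d=e$ substitution). No gap; the two arguments differ only in which middle slot of $(a*b)*(c*d)=(a*c)*(b*d)$ absorbs the unit.
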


\begin{proof}
Assume that $e*m=m=m*e$ is the unit element in $(M,*)$. Assuming abstractability, $(a*b)*(c*d)=(a*c)*(b*d)$ holds for all $a,b,c,d \in M$. 
Then (setting $b=e$) one has 
$a*(c*d)\stackrel{\texttt{unit}}{=}(a*e)*(c*d)\stackrel{\texttt{abstractable}}{=}(a*c)*(e*d)\stackrel{\texttt{unit}}{=}(a*c)*d$ (associativity).
The commutativity follows by Lemma \ref{fundamental2komm}.
\end{proof}

\begin{remark}
The converse of Lemma \ref{fundamental2} is not true: For $(\N\setminus \{0\},+)$, $+$ is commutative and associative, 
so by Lemma \ref{fundamental1} abstractable.
But as $0\notin \N\setminus \{0\}$, no unit element is present.
\end{remark}

\noindent
In the case of a magma with unit element, one can immediately conclude:

\begin{corollary}
\label{cor:zentral}
For a magma $(M,*)$ with unit element $e$ it holds that\\
$(M,*)$ abstractable $\Leftrightarrow$ ($*$ commutative $\wedge$ $*$ associative),\\
in particular if there is an abstractable magma $(M,*)$ where $*$ is not associative or not commutative,
it cannot have a unit element.
\end{corollary}

%

\noindent The following box shows the contributions of this paper. \\
\fbox{\begin{minipage}{0.9\linewidth}
The following statements hold for the given algebraic structures:
\begin{enumerate}
  \item $\texttt{monoid } (M,*) \Rightarrow [ (*$ commutative) $\Leftrightarrow (M,*)$ abstractable ]
  \item $\texttt{semigroup } (M,*)$
      \begin{enumerate}
        \item $*$ commutative $\Rightarrow (M,*)$ abstractable
        \item $*$ non-commutative:
                 \begin{enumerate}
                   \item $\exists (M,*):$\ $(M,*)$ abstractable
                   \item $\exists (M,*):$\ $(M,*)$ not abstractable
                 \end{enumerate}
      \end{enumerate}
  \item $\exists \texttt{ magma } (M,*):$\ ($*$ non-associative) $\wedge (M,*)$ abstractable
\end{enumerate}\end{minipage}}\\
\begin{proof}
Statement 1 is Corollary \ref{cor:zentral}. Statement 2(a) is Lemma \ref{fundamental1}.
Statement 2(b)i is proven in Sec.~\ref{sec:positiveexample}. Statement 2(b)ii is treated in Sec.~\ref{sec:negativeexample}. 
Statement 3 is justified by the examples in Sec.~\ref{sec:counterexample} (one commutative and one non-commutative example).
\end{proof}

\section{The case of a monoid}
\label{sec:monoid}
This section gives some examples for monoids.

\subsection{Non-abstractable examples}
\subsubsection{Permutation group}
\label{sec:permutations}
It is well-known that the permutations of 3 elements define an associative but not commutative group with composition of permutations as operation.
The group is denoted by $(S_3,\circ)$. As a group is, in particular, also a monoid, Corollary \ref{cor:zentral} shows that $(S_3,\circ)$ is not abstractable.

\subsubsection{Matrix multiplication}
It is well-known that matrix multiplication is associative but not commutative.
As there is also a unit element, $Mat(2\times 2, \R)$ forms a non-commutative monoid. Therefore by Corollary \ref{cor:zentral} it is clear that 
$(Mat(2\times 2, \R),\cdot)$ is not abstractable.

\subsection{Abstractable examples}

Here one can take the usual examples $(\Z,+)$, $(\R,\cdot)$, etc.~as they form commutative monoids.

\section{The case of a magma}
\label{sec:counterexample}
The aim of this section is to show that neither commutativity nor associativity is required for $(M,*)$ to be abstractable.
Especially these examples show that associativity is not necessary and therefore disprove the statements
``The key point here is associativity. If operator $*$ does not have this property, then the result of generic abstraction depends on the order of applying $*$ $\ldots$''
(\cite{bahar:97}, p.~179),
``The operator $*$ must be associative as the order in which the co-factors and the variables
are chosen shall not influence the outcome of the operation.'' (\cite{kuntz:06}, p.~39) and
``It now becomes obvious that associativity of the operator $*$ is required in order to ensure
that the ordering of the variables and the order in which the cofactors are chosen do not 
influence the outcome.'' (\cite{Siegle:Habil}, p. 40/41).

\subsection{Not associative and not commutative magma (finite)}
Given the two-element set $\{0,1\}$ with the magma:
{\small\[
  \begin{tabular}{|c||c|c|} \hline
  * & 0 & 1 \\ \hline \hline
  0 & 1 & 0 \\ \hline
  1 & 1 & 0 \\ \hline
  \end{tabular}
\]}
Looking at the composition table it is clear that it is not commutative $(0*1\neq 1*0)$ and not associative $\underbrace{(0*0)}_1*0\neq 0*\underbrace{(0*0)}_1$.
But still $(\{0,1\},*)$ is abstractable: Defining $\bar{0}:=1$, $\bar{1}:=0$ it holds
that $x*y=\bar{y}$, therefore $\underbrace{(x_1 * x_2)}_{\bar{x}_2} * \underbrace{(x_3 * x_4)}_{\bar{x}_4}=x_4$.
Therefore it is trivially abstractable.

\subsection{Not associative and not commutative magma (infinite, by Greither)}
The set $\Z:=\{\ldots,-2,-1,0,1,2,\ldots\}$ with the operation $-$. The subtraction is not commutative ($2-3=-1$, but $3-2=1$), not associative
$(1-(2-3)=2\neq (1-2)-3=-4)$, but is abstractable as $(x_1-x_2)-(x_3-x_4)=x_1-x_2-x_3+x_4=(x_1-x_3)-(x_2-x_4)$. 




\subsection{Not associative but commutative magma}
\label{example:non-trivial}
This example is a non-associative (but commutative) mapping $*$ on a set $\{a,b,c,d\}$ where $(\{a,b,c,d\},*)$ is abstractable.
Given a magma $(\{a,b,c,d\},*)$ with the operation $*$ defined as:
{\small
\[
  \begin{tabular}{|c||c|c|c|c|} \hline
  * & a & b & c & d \\ \hline \hline
  a & a & c & b & d \\ \hline
  b & c & a & d & b \\ \hline
  c & b & d & a & c \\ \hline
  d & d & b & c & a \\ \hline
  \end{tabular}
\]}
Then $*$ is not associative, as $(a*b)*c=a$, but $a*(b*c)=d$. 
It can be seen that $(M,*)$ is (non-trivially) abstractable by checking all possible combinations (e.g.~by a computer program) or by the following verification
that exploits symmetries (note that $*$ is commutative):
%
\begin{enumerate}\item Four different values: $(x_1*x_2)*(x_3*x_4)=a$ \\
                         Therefore one checks \\
                         {\small
                         \begin{tabular}{|c|c|c|c|c|c|} \hline
                         $x_1$ & $x_2$ & $x_3$ & $x_4$ & $x_1*x_2$ & $x_3*x_4$  \\ \hline \hline
                         a     & b     & c   & d   &  c        & c          \\ \hline
                         a     & c     & b   & d   &  b        & b          \\ \hline
                         a     & d     & b   & c   &  d        & d          \\ \hline
                         \end{tabular}} \\
                         and the remaining cases follow by commutativity.
                         
                   \item Three different values: $\underbrace{(x_1*x_1)}_a*(x_2*x_3)=(x_1*x_2)*(x_1*x_3)$
                         Here one has to check a little bit more: \\
                         {\small
                         \begin{tabular}{|c|c|c|c|l|} \hline
                         $x_2$ & $x_3$ & $a*(x_2*x_3)$ & $x_1$ & $(x_1*x_2)*(x_1*x_3)$ \\ \hline \hline
                         a     & b     & b             & c     & b * d = b  \\ \cline{4-5}
                               &       &               & d     & d * b = b  \\ \hline
                         \end{tabular}} \\
                         One checks similarly the cases $$(x_2,x_3,x_1)=\{(a,c,b/d), (a,d,b/c), (b,c,a/d), (b,d,a/c), (c,d,a/b)\}.$$ The remaining cases follow by commutativity.
                    \item The case of three equal values is clear by commutativity.
                    \item For four equal values there is nothing to show.
\end{enumerate}

\section{The case of a semigroup}
\label{sec:semigroup}
Let $(M,*)$ be a semigroup (* is associative, but not necessarily commutative and no unit element is required).

\subsection{Non-abstractable example (Tamura)}
\label{sec:negativeexample}

\label{sec:tamura}
This semigroup $(\{a,b,c,d\},*)$ is one of the examples presented in \cite{takayuki:56}. The operation is defined as
{\small
\[
  \begin{tabular}{|c||c|c|c|c|} \hline
  * & a & b & c & d \\ \hline \hline
  a & a & a & a & a \\ \hline
  b & b & b & b & b \\ \hline
  c & c & c & c & c \\ \hline
  d & a & a & b & a \\ \hline
  \end{tabular}.
\]}
The reader may convince himself that the operation is indeed associative. From the table it is clear that $*$ is not commutative.
Further, $(\{a,b,c,d\},*)$ cannot be a monoid, as e.g.~there is no element $e$ such that $e*d=d$ ($d$ is not even in the image of the operation).
Now $(d*a)*(c*b)=a\neq (d*c)*(a*b)=b$ and therefore $(\{a,b,c,d\},*)$ is not abstractable.

\subsection{Abstractable examples}
\label{sec:positiveexample}


\label{sec:projections}
There are abstractable semigroups $(M,*)$ where the operation $*$ is not commutative.
\begin{definition}
A \emph{trivial associative non-commutative semigroup} is a set $M$ together with a (left or right) projection $*:M\times M\rightarrow M$.
\end{definition}

\begin{lemma}
\label{projlemma}
If $(M,*)$ is a trivial associative non-commutative semigroup, then $(M,*)$ is trivially abstractable.
\end{lemma}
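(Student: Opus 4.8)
The plan is to verify directly the defining identity of abstractability, namely $(a*b)*(c*d)=(a*c)*(b*d)$ for all $a,b,c,d\in M$, by unfolding the definition of a projection. Since a trivial associative non-commutative semigroup is by definition equipped with either a left projection or a right projection, I would split into these two cases and treat them separately (though symmetrically).

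First I would recall the two possible shapes of the operation. For a left projection, $x*y=x$ for all $x,y\in M$; for a right projection, $x*y=y$ for all $x,y\in M$. In either case it is immediate that $*$ is associative and (for $|M|>1$) non-commutative, which is consistent with the definition, so no work is needed on those points.

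For the left-projection case, I would simply substitute: the inner products collapse via $a*b=a$ and $c*d=c$, so the left-hand side becomes $a*c=a$; likewise the right-hand side becomes $(a*c)*(b*d)=a*b=a$. Both sides equal $a$, so the identity holds. For the right-projection case the computation is dual: $a*b=b$, $c*d=d$ give a left-hand side $b*d=d$, while $a*c=c$, $b*d=d$ give a right-hand side $c*d=d$, so both sides equal $d$. This establishes the identity in both cases, hence abstractability by the definition following Lemma \ref{lemma2a}.

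I expect no genuine obstacle here; the word ``trivially'' in the statement is justified precisely because the abstraction identity reduces, after one substitution, to an equality of two copies of the same single terminal value ($a$ in the left-projection case, $d$ in the right-projection case). The only point requiring care is to carry out both the left- and right-projection cases rather than just one, so that the lemma is covered in the full generality in which it is stated.
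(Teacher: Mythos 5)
Your proposal is correct and matches the paper's approach: the paper's proof is simply ``Clear by definition,'' and your case-by-case substitution (both sides equal $a$ for the left projection, $d$ for the right projection) is exactly the routine verification that remark compresses. No gap, and your care to treat both projection cases is appropriate given the lemma's wording.
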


\begin{proof}
Clear by definition.
\end{proof}

%

\section{Operations on $\R$}
\label{sec:onR}

In this section we show how the results of the previous sections can be applied to the real numbers.

\subsection{Structure transport}
We give a construction for a structure transport from semigroups to sets.
Thereby, for example, an associative but non-commutative operation on $\R$ (modulo a certain equivalence relation) can be defined.
\begin{lemma}
\label{lemma1}
Given a set $S$, a non-commutative semigroup $(M,*)$ (i.e.~$\exists a,b\in M: a* b \neq b* a$)
and a surjective mapping $f:S\rightarrow M$. Then there exists an injective mapping $g:M\rightarrow S$, such that $f\circ g= id$ (identity)
and a non-commutative associative operation on $\nicefrac{S}{\sim}$ can be defined by
\begin{eqnarray*}
*^{*}:& \nicefrac{S}{\sim}\times \nicefrac{S}{\sim} & \rightarrow  \nicefrac{S}{\sim} \\
      & (a,b)     & \mapsto      g(f(a) * f(b)).
\end{eqnarray*}
Here $\sim$ is the equivalence relation defined by $a\sim b:\Leftrightarrow g(f(a))=g(f(b))$.
\end{lemma}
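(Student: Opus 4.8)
The plan is to build $g$ by choosing, for each element of $M$, a preimage under the surjection $f$, and then to show that the transported operation inherits associativity and non-commutativity directly from $(M,*)$. First I would invoke surjectivity of $f$ (using the axiom of choice) to select, for every $m \in M$, an element $g(m) \in S$ with $f(g(m)) = m$; this defines $g: M \rightarrow S$ with $f \circ g = \text{id}_M$ by construction. Injectivity of $g$ is then immediate: if $g(m_1) = g(m_2)$, applying $f$ and using $f \circ g = \text{id}$ gives $m_1 = m_2$.

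The key simplification I would record next is that, because $g$ is injective, the condition $g(f(a)) = g(f(b))$ is equivalent to $f(a) = f(b)$. Hence $\sim$ is nothing but the kernel relation of $f$ (its reflexivity, symmetry, and transitivity are inherited from equality), and its classes are exactly the fibers of $f$. In particular $\nicefrac{S}{\sim}$ is in bijection with $M$ via $[a] \mapsto f(a)$, which is the conceptual reason the algebraic structure transports cleanly.

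With this in hand I would verify well-definedness of $*^{*}$: if $a \sim a'$ and $b \sim b'$ then $f(a) = f(a')$ and $f(b) = f(b')$, so $f(a) * f(b) = f(a') * f(b')$ and therefore $g(f(a)*f(b))$ and $g(f(a')*f(b'))$ coincide as elements of $S$, giving the same class. For associativity I would compute $([a] *^{*} [b]) *^{*} [c]$ and collapse the inner $f \circ g$ using $f(g(x)) = x$: the intermediate term $f(g(f(a)*f(b)))$ reduces to $f(a)*f(b)$, so the left association equals $g((f(a)*f(b))*f(c))$ while the right association equals $g(f(a)*(f(b)*f(c)))$; associativity of $*$ then makes these agree. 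This $f \circ g$ cancellation on the intermediate term is the one place needing care, and it is essentially the only obstacle; everything else is bookkeeping.

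Finally, for non-commutativity I would pick witnesses $a_0, b_0 \in M$ with $a_0 * b_0 \neq b_0 * a_0$ (these exist by hypothesis) and set $s := g(a_0)$ and $t := g(b_0)$, so that $f(s) = a_0$ and $f(t) = b_0$. Then $[s] *^{*} [t] = [g(a_0 * b_0)]$ and $[t] *^{*} [s] = [g(b_0 * a_0)]$, and since applying $f$ recovers $a_0 * b_0 \neq b_0 * a_0$, these lie in distinct fibers, hence in distinct classes. This shows $*^{*}$ is non-commutative and completes the argument.
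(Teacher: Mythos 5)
Your proposal is correct and follows essentially the same route as the paper's proof: construct a section $g$ of the surjection $f$ (so $f\circ g=\mathrm{id}$, whence $g$ is injective), cancel the intermediate $f\circ g$ to transport associativity of $*$, and pull back non-commuting witnesses from $M$, using injectivity of $g$ to keep their products in distinct classes. Your additional observations---that $\sim$ is exactly the kernel relation of $f$ (so $\nicefrac{S}{\sim}$ is in bijection with $M$) and the explicit check that $*^{*}$ is well defined on representatives---are sound refinements the paper leaves implicit, not a different argument.
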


\begin{proof} The mapping $g$ is well-known to exist. It remains to show that $(\nicefrac{S}{\sim},*^{*})$ is a non-commutative semigroup: \\
{\bf Associativity:} $a*^{*}(b *^{*} c) = g\left(f(a) * \underbrace{f( g}_{id}(f(b) * f(c)) )\right) = g\left(f(a) * (f(b) * f(c)) \right)$.
Analogously one has $(a*^{*}b) *^{*} c = \ldots = g\left( ( f(a) * f(b) ) * f(c) \right)$. So because of the associativity of $*$
the associativity is proven. \\
{\bf Non-commutative:} As $(M,*)$ is non-commutative, it follows that $\exists m_1,m_2\in M : m_1* m_2 \neq m_2 * m_1$.
As $f$ is surjective there exist $s_1, s_2$ such that $f(s_1)=m_1$, $f(s_2)=m_2$. Therefore $f(s_1)* f(s_2)\neq f(s_2)* f(s_1)$.
As $g$ is injective it follows that $s_1*^* s_2 \neq s_2 *^* s_1$.
\end{proof}

\subsection{Associative operations where $(\nicefrac{\R}{\sim},*)$ is not abstractable}

\subsubsection{Permutation group}
For example choose the surjection 
\begin{eqnarray*}\R \rightarrow S_3 \\
                  x \mapsto \sigma_{\lceil x \rceil\mod 6}
\end{eqnarray*}
and an appropriate mapping $S_3\rightarrow \R$.
That construction leads by Lemma \ref{lemma1} to a semigroup (that is already a monoid). By Corollary \ref{cor:zentral} it is clear that $(\nicefrac{\R}{\sim},*)$ is not abstractable.

\subsubsection{$\R^2$ using matrix multiplication}
Using the following mapping, $\R^2$ can be endowed with an associative, but non-commutative operation:
\begin{eqnarray*}
m: & \R^2 \rightarrow & Mat(2\times 2, \R) \\
   & (a,b) \mapsto   &    \begin{cases} \left(\begin{array}{cc} 1 & 0 \\ 0 & 1 \end{array}\right) & \texttt{ for } a=b=0 \\
                                        \left(\begin{array}{cc} a & a \\ b & b \end{array}\right) & \texttt{ otherwise }
                          \end{cases}
\end{eqnarray*}
With the usual matrix multiplication it is clear that the resulting matrices are again of this special form, so it is indeed an operation.
In this way $\R^2$ is equipped with a non-commutative monoid structure (cf.~Lemma \ref{lemma1}). 
The mapping $m: \R^2 \setminus \{(0,0)\}\rightarrow Mat(2\times 2, \R)$ is continuous. 
Note that no two elements of $\R^2$ are identified by $\sim$. The monoid $(\R^2,*)$ is not abstractable.
%
%
%
%
This monoid can be used to define an operation on $\R$ modulo some equivalence relation. 
A mapping $f:\R \rightarrow \R^2$ can be defined by an appropriate space filling curve (e.g.~Peano space filling curve which is surjective and smooth, but \emph{not} injective) and an isomorphism 
$(0,1)\simeq \R$ (e.g.~$\tan(\pi(x-\frac{1}{2}))$). So one has the (surjective) mapping $\R \simeq (0,1) \stackrel{\texttt{Peano}}{\rightarrow} (0,1)^2 \simeq \R^2$
that can be used for the structure transport by Lemma ~\ref{lemma1} (choosing an appropriate mapping $\R^2\rightarrow \R$). Again, $(\nicefrac{\R}{\sim},*)$ is not abstractable.

\subsection{Associative operations where $(\R,*)$ is abstractable}

\subsubsection{Projection functions}
The projection functions 
\begin{eqnarray*}
l: & \R \times \R   \rightarrow \R \\
   & (a,b)          \mapsto a
\end{eqnarray*}
and 
\begin{eqnarray*}
r: & \R \times \R   \rightarrow \R \\
   & (a,b)          \mapsto b
\end{eqnarray*}
can be easily shown to be non-commutative. By Lemma \ref{projlemma} $(\R,l)$ and $(\R,r)$ are abstractable.

\begin{remark}  
Among all affine mappings $\R^2\rightarrow \R$, the only operations that are associative and not commutative are the projection functions:
Assume $t(x,y):=ax+by+c$ with $a,b,c\in \R$. Then, because of non-commutativity, one has $a\neq b$. For associativity it must hold that
$t(x,t(y,z))=t(t(x,y),z)$ and therefore the terms
\begin{eqnarray*}
  t(x,t(y,z))= & t(x,ay+bz+c)=ax+b(ay+bz+c)+c=ax+aby+b^2z+bc+c \\
  t(t(x,y),z)= & t(ax+by+c,z)=a(ax+by+c)+bz+c=a^2x+aby+bz+ac+c
\end{eqnarray*}
must coincide for all $x$, $y$, $z\in \R$. 
Comparing coefficients leads to $a^2=a$, $b^2=b$ and therefore $a,b\in\{0,1\}$.
From $bc+c=ac+c$ and $a\neq b$ it follows that $c=0$. Therefore only the projections remain.
\end{remark}
%

\subsubsection{Other continuous examples} 
There are 
associative non-commutative
{\bf{continuous}} operations
    $$l: \R \times \R \longrightarrow \R $$
where $(\R,l)$ is abstractable.
Exemplary construction:

\begin{enumerate}
\item 
   $M:= ]4, \infty [$ and 
   
    $$\begin{array}{rcl}h: & M \times M & \longrightarrow M  \\
                            & (   x,y)    & \mapsto \min(\min(x+1,16) \cdot y , 64)\end{array}$$
   
\item 
   $h$ is well-defined:\\ 
   
   $x,y\in M \Rightarrow \min(x+1,16) > 5 \Rightarrow (\min(x+1,16) \cdot y) > 5\cdot 4 = 20 \Rightarrow $\\
   $\min(\min(x+1,16) \cdot y , 64) >20$.

   So $$ h(x,y) > 20 \,\,\forall\,\, (x,y) \in M \times M$$
   or 
   $20$ is a lower bound of $h(M \times M)$.

\item 
 $h$ is associative:\\
   $x,y,z \in M$.
   \begin{itemize}
   \item
   Knowing $ h(x,y) > 20 $  and $z>4$:\\
   $
   h( (h(x,y) , z) = 
     \min( \min(h(x,y)+1 ,16) \cdot z , 64) =  
     \min( (16 \cdot z) , 64) = 64
   $

   \item
   Knowing $ h(y,z) > 20 $  and $x>4 \Rightarrow$ 
   $ \min(x+1,16) \cdot h(y,z) > 64$, so \\
   $
   h(x, h(y,z)) = \min(\min(x+1,16) \cdot h(y,z) , 64) = 64
   $
   \end{itemize}

   $$h( (h(x,y) , z) = h(x, h(y,z)) = 64 \,\, \forall\,\, x,y,z \in M$$

\item 
 $h$ is not commutative:\\ $h(5,4) = 24$ and $h(4,5) = 25$.

\item 
 $h$ is continuous as a composition of continuous mappings 

\item 
   There exists a homeomorphism $f: \R \longrightarrow M$ (e.g. $x\mapsto
   \exp(x)+4$).\\
   Using the construction given in the proof of Lemma \ref{lemma1} one gets: 

   $$\begin{array}{rcl}l: & \R \times \R & \longrightarrow \R \\
                           &             (a,b)               & \mapsto f^{-1}( h(f(a) , f(b)) )\end{array} $$

   $l$ is continuous, associative and non-commutative. 
   Note that nothing is identified by $\sim$ for this construction.

\end{enumerate}
So, as in the abstraction formula always at least two consecutive operations have to be performed, the result of
any abstraction is equal to $64$, i.e.~constant (which can also be seen as some kind of triviality).

\section{Conclusion}
\label{sec:conclusion}
We have corrected and extended statements made in \cite{bahar:97,kuntz:06,Siegle:Habil}. 
It is now clear that not \emph{associativity} is the crucial point, but
one has to differentiate. In the case of a monoid, \emph{commutativity} is crucial (here: necessary and sufficient) for the independence 
of the abstraction operation of the 
order of the abstracted variables. For the cases of magmas, semigroups and monoids a couple of examples show 
how the abstraction behaves for a set of abstracted variables. 
The paper shows that the well-definedness of the abstraction operation is equivalent to the operation being an abstractable magma.
Basic examples of abstractable magmas over the real numbers (not known to the MTBDD community prior to this paper) 
are subtraction and division.
%
%

{\bf Acknowledgements:} Special thanks to Markus Siegle: In an exercise to his lecture on distributed systems the question arose
that was the spark to this paper. Further he first gave a simplified version of the proof of Lemma \ref{lemma2a} that showed that the preimage does 
not have to be a Boolean algebra, a two element set is sufficient. We would also like to thank Cornelius Greither who enormously helped to clarify
the statements (especially Lemma \ref{fundamental2}) and gave another nice example.
We would further like to thank Alexander Gouberman for the fruitful mathematical discussions.

\bibliographystyle{spmpsci}      
\bibliography{springer}   

\end{document}